\documentclass[pra,english,reprint,nofootinbib,aps,superscriptaddress,showpacs,showkeys]{revtex4-1}

\usepackage{babel,calc,amsmath,amsthm,amssymb,graphicx,subfigure,xcolor,
      longtable,braket,bm,cases,makecell}

\usepackage[T1]{fontenc}
\setcounter{secnumdepth}{3}
\usepackage[unicode=true]{hyperref}
\hypersetup{
     colorlinks=true,       		
     linkcolor=blue,          	
     citecolor=red,            
     urlcolor=magenta,           	
}
\newtheorem*{theorem*}{Theorem}
\newtheorem{theorem}{Theorem}

\newtheorem*{corollary*}{Corollary}

\newtheorem*{lemma*}{Lemma}

\newtheorem*{proposition*}{Proposition}

\newtheorem*{conjecture*}{Conjecture}
\theoremstyle{definition}

\newtheorem*{definition*}{Definition}
\theoremstyle{remark}

\newtheorem*{remark*}{Remark}



\newcommand{\Eq}[1]{Eq.~\eqref{#1}}

\begin{document}
\title{$SO(4)$ Symmetry in Hydrogen Atom with Spin}

\author{Xing-Yan Fan}
\affiliation{Theoretical Physics Division, Chern Institute of Mathematics and LPMC, Nankai University, Tianjin 300071, People's Republic of China}

\author{Xiang-Ru Xie}
\affiliation{School of Physics, Nankai University, Tianjin 300071, People's Republic
      of China}

\author{Sheng-Ming Li}
\affiliation{School of Physics, Nankai University, Tianjin 300071, People's Republic
      of China}

\author{Jing-Ling Chen}
\email{chenjl@nankai.edu.cn}
\affiliation{Theoretical Physics Division, Chern Institute of Mathematics and LPMC, Nankai University, Tianjin 300071, People's Republic of China}

\date{\today}

\begin{abstract}
As the simplest atom in nature, the hydrogen atom has been explored thoroughly from the perspective of non-relativistic quantum mechanics to relativistic quantum mechanics. Among the research on hydrogen atom, its energy level is the most basic, which can be obtained more conveniently predicated on the $SO(4)$ symmetry than the wave-equation resolution. Moreover, ``spin'' is another indispensable topic in quantum mechanics, appearing as an intrinsic degree of freedom. In this work, we generalize the quantum Runge-Lenz vector to a spin-dependent one, and then extract a novel  Hamiltonian of hydrogen atom with spin based on the requirement of $SO(4)$ symmetry. Furthermore, the energy spectrum of hydrogen atom with spin potentials is also determined by the remarkable approach of $SO(4)$ symmetry. Our findings extend the ground of hydrogen atom, and may contribute to other complicated models based on hydrogen atom.
\end{abstract}


\maketitle
\section{Introduction}
After the birth of quantum mechanics (QM), the Schr\"odinger equation was used to address the problem of hydrogen atom (HA), which underlies the solution to other problem involving Coulomb interaction in QM, such as the problem of hydrogen-like atom \cite{GrinerQMHAtom}. Furthermore, the triumph of computing its energy spectrum
prompt physicists to research on other questions covering HA using quantum theory, for example, the low-energy scattering through HAs \cite{1962RMPBurke}, the perturbations of HA \cite{2010RMPEfstathiou}, or the long-range interaction of HAs at finite temperatures \cite{2024PRAZalialiutdinov}.

As is well known that symmetry is instrumental importantly in solving the HA problem \cite{2022A1CourSym}. Explicitly, the spherical symmetry helps us decompose the Schr\"odinger equation into the radial and angular parts, thus simplifying the complicated partial differential equation to its ordinary partner; certainly if one merely focuses on the energy spectrum of HA, a succinct method can be adopted concerning $SO(4)$ symmetry \cite{1988PauliSO4,GrinerSymSO4,1991Lange}. Thereafter, this hidden symmetry drew more attention \cite{1966RMPBANDER1,1966RMPBANDER2}. For instance, it was used to explicate the Stark states of HA as an irreducible representation of the $O(4)$ group \cite{1967PPSHughes}; more its subgroup reductions were considered in distinct bases \cite{1976SIAMJAM}; it was investigated even in virtue of the computer algebra system \cite{2021CPCSzriftgiser}.

Another essential conception ``spin'' is also connected with the symmetry in QM intimately. Under the circumstance of non-relativistic QM (NQM), the $SU(2)$ description for spin-1/2 system was revealed by Pauli \cite{1927Pauli}. Soon then Dirac finished the relativistic quantum-mechanical treatment for electron \cite{1928Dirac}; emphatically, the ``spinors'' as well as the Dirac matrices (satisfying Clifford algebra) appeared for the first time, which were then developed as a branch of geometry, dubbed spin geometry \cite{1989SpinGeo}. It is worth mentioning that the spin-based symmetry is not just a theoretical object, but an experimentally-observed property in condensed matter physics, e.g., the spin-group symmetry in magnetic materials \cite{2022PRXLiu}.

%

For a more prevalent taste of physicists, combining different symmetry will refresh their understanding of diverse models in QM. In this work, we advance the study of HA by pondering $SO(4)$ symmetry and ``spin'' simultaneously, in turn, we successfully attain a spin-dependent Hamiltonian for HA that still possesses the $SO(4)$ symmetry.

The article is organized as follows: In Sec. \ref{sec:review}, we make a brief review of $SO(4)$ symmetry by starting from the Hamiltonian of HA; In Sec. \ref{sec:inverse}, we consider an inverse problem, i.e., how to determine the Hamiltonian of the usual HA if one  starts from the quantum Runge-Lenz (QRL) vector; In Sec. \ref{sec:GenQRLVec}, we generalize the quantum Runge-Lenz vector with spin, and extract a Hamiltonian of HA with spin based on the requirement of $SO(4)$ symmetry; In Sec. \ref{sec:EHSpin}, the energy spectrum of HA with spin potentials is determined by the approach of $SO(4)$ symmetry; Conclusion and discussion are made in the Sec. \ref{sec:DisConcl}.

\section{Brief Review of $SO(4)$ Symmetry in Hydrogen Atom}\label{sec:review}

The Hamiltonian of HA is given by~\cite{GrinerQMHAtom}
            \begin{equation}\label{eq:ha-1}
                  H=\dfrac{\vec{p}^{\,2}}{2M} -\dfrac{\kappa}{r},
            \end{equation}
      where $\vec{p}=(p_x, p_y, p_z)$ is the linear momentum operator; $M=[m_{\rm e}m_{\rm p} /(m_{\rm e}+m_{\rm p})]$ is the reduced mass with $m_{\rm e}$ and $m_{\rm p}$ representing the mass of electron and proton, respectively; $\kappa=Z e^2$, and for HA, $Z=1$; $e$ depicts the absolute value of the charge involving an electron; $r=|\vec{r}|$ indicates the norm of coordinate operator $\vec{r}=(r_x, r_y, r_z)$ regarding the relative motion between electron and proton.

      One may define the orbital angular momentum operator as
             \begin{equation}
             \vec{\ell}=\vec{r}\times \vec{p},
            \end{equation}
      which satisfies the definition of angular momentum
             \begin{equation}
             \vec{\ell} \times \vec{\ell}={\rm i}\hbar \, \vec{\ell},
            \end{equation}
      or in a commutator form as
             \begin{equation}\label{eq:so3}
            \left[{\ell}_u, {\ell}_v\right]={\rm i}\hbar \, \epsilon_{uvw} {\ell}_w,
            \end{equation}
      with $\epsilon_{uvw}$ being the Levi-Civita symbol, and $\epsilon_{xyz}=\epsilon_{yzx}=\epsilon_{zxy}=+1$, $\epsilon_{zyx}=\epsilon_{yxz}=\epsilon_{xzy}=-1$.

      Due to
      \begin{eqnarray}
           && \left[{\ell}_u, r_v\right]={\rm i}\hbar \, \epsilon_{uvw} r_w,\;\;\;\;\left[{\ell}_u, p_v\right]={\rm i}\hbar \, \epsilon_{uvw} p_w,
            \end{eqnarray}
      it is easy to have
      \begin{equation}
             \left[\vec{\ell},\, r\right]=0, \;\;\; \left[\vec{\ell}, \,\vec{p}^{\,2}\right]=0,
      \end{equation}
      and then one proves that
       \begin{equation}
             \left[H, \,\vec{\ell}\, \right]=0.
      \end{equation}
Notably, Eq. (\ref{eq:so3}) implies that a HA system possesses $SO(3)$ symmetry.

      However, the HA can have a hidden symmetry of $SO(4)$ owing to the existence of the quantum Runge-Lenz (QRL) vector, defined by~\cite{1988PauliSO4,GrinerSymSO4}
            \begin{eqnarray}\label{eq:DefR}
                  && \vec{R}=\dfrac{1}{2M} \left(\vec{p}\times\vec{\ell}
                              -\vec{\ell}\times\vec{p}\right)
                        -\kappa\,\dfrac{\vec{r}}{r},
            \end{eqnarray}
      which satisfies
      \begin{eqnarray}
       && \vec{R}\times \vec{R} = \dfrac{-2 H}{M}{\rm i}\hbar \, \vec{\ell}, \label{eq:RREl}
      \end{eqnarray}
      and
      \begin{subequations}
                  \begin{eqnarray}
                        && \left[H, \vec{R}\right]=0, \label{eq:RHComut} \\
                        && \vec{\ell}\cdot\vec{R}=\vec{R}\cdot\vec{\ell}
                              =0, \label{eq:RLLR0} \\
                        && \vec{R}^{\, 2} =\dfrac{2 H}{M} \left(\vec{\ell}^{\,2} +\hbar^2\right)
                              +\kappa^2, \label{eq:R2H} \\
                        && \left[{\ell}_u,\, {R}_v\right]={\rm i}\hbar \, \epsilon_{uvw}
                                    R_w. \label{eq:ElR}
                  \end{eqnarray}
       \end{subequations}
       Since $\{\vec{\ell}, \vec{R}\}$ satisfy the following $SO(4)$ algebraic relations (up to a re-scaling factor $\sqrt{M/(-2H)}$ multiplied together $\vec{R}$)
      \begin{subequations}
      \begin{eqnarray}
         &&   \left[{\ell}_u, {\ell}_v\right]={\rm i}\hbar \, \epsilon_{uvw} {\ell}_w, \label{eq:SO4H-1b} \\
         &&   \left[{\ell}_u, {R}_v\right]={\rm i}\hbar \, \epsilon_{uvw} {R}_w, \label{eq:SO4H-2b}\\
         &&   \left[{R}_u, {R}_v\right]=\frac{-2 H}{M}{\rm i}\hbar \, \epsilon_{uvw} {\ell}_w, \label{eq:SO4H-3b}
      \end{eqnarray}
      \end{subequations}
      and thus the HA system possesses $SO(4)$ symmetry.

      \section{An Inverse Problem}\label{sec:inverse}

      A HA is a real physical system existing in nature. The Hamiltonian (\ref{eq:ha-1}) of a HA can be recast to
             \begin{equation}\label{eq:ha-2}
                  H=\dfrac{\vec{p}^{\,2}}{2M} +V_{\rm cp}(r),
            \end{equation}
      where
       \begin{equation}\label{eq:ha-3}
                  V_{\rm cp}(r) =-\dfrac{\kappa}{r}
            \end{equation}
      represents the well-known Coulomb potential (CP). In the previous section, we have considered a \emph{forward problem}: Given the Hamiltonian of a HA, can one determine its underlying symmetry? The answer is positive. The underlying symmetry is just $SO(4)$ symmetry, as shown in Eqs. (\ref{eq:SO4H-1b})-(\ref{eq:SO4H-3b}).

      In this section, let us consider an \emph{inverse problem}: Given the underlying $SO(4)$ symmetry, can one construct the Hamiltonian of a HA (or derive the CP)? The answer is still positive. The key point is the QRL vector; if one focuses on Eq. (\ref{eq:RREl}), he will find that the Hamiltonian $H$ can be extracted from $\vec{R}\times \vec{R}$ by requiring the result proportional to $H\,\vec{\ell}$. The following is the procedure of construction.

      Without lose of generality, let us define the QRL vector in the following form:
      \begin{eqnarray}\label{eq:DefR-1a}
                  && \vec{R}=\dfrac{1}{2M} \left(\vec{p}\times\vec{\ell}
                              -\vec{\ell}\times\vec{p}\right)
                        + f(r)\, \vec{r},
            \end{eqnarray}
      where $f(r)$ is a certain function depending on $r$, and we only require
       \begin{eqnarray}\label{eq:DefR-1b}
                  && [\vec{\ell}, f(r)] = 0.
            \end{eqnarray}
      If one compares Eq. (\ref{eq:DefR}) and Eq. (\ref{eq:DefR-1a}), one may notice that
      \begin{eqnarray}\label{eq:DefR-1c}
                  && f(r) =-\dfrac{\kappa}{r}.
            \end{eqnarray}
      However, suppose we do not know it in the beginning, thus temporarily treating $f(r)$ as an unknown function.

      By direct calculation, we easily have
      \begin{eqnarray}
                  && [{\ell}_u,\ {\ell}_v]={\rm i}\hbar\, \epsilon_{uvw}\, {\ell}_w, \label{eq:SO4H-1c}\\
                 && [{\ell}_u,\ {R}_v]={\rm i}\hbar\, \epsilon_{uvw}\, {R}_w.\label{eq:SO4H-2c}
      \end{eqnarray}
      The underlying $SO(4)$ symmetry requires the following commutative relation
       \begin{eqnarray}
               && [{R}_u,\ {R}_v] \propto {\rm i}\hbar\, \epsilon_{uvw}\, {\ell}_w,
      \end{eqnarray}
      i.e., $[{R}_u,\ {R}_v]$ is required to be proportional to $\ell_w$, and the proportional coefficient is related to the Hamiltonian $H$ of a HA. To reach this purpose, let us compute the commutator $[{R}_x,\ {R}_y]$, or $\vec{R}\times \vec{R}$.

      The result gives
      \begin{eqnarray}\label{eq:RCrosR-0a}
                   \vec{R}\times\vec{R}
                 & =&-{\rm i}\hbar\dfrac{2}{M} \biggl(\dfrac{\vec{p}^{\, 2}}{2\,M}
                        +\dfrac{1}{2} \Bigl\{\big[\vec{\nabla} f(r)\big]\cdot\vec{r}+3\,f(r)
                              \Bigr\}\biggr)\vec{\ell} \notag \\
                  &=& \dfrac{-2 H}{M}{\rm i}\hbar \, \vec{\ell},
            \end{eqnarray}
     based on which we can immediately extract a Hamiltonian as
            \begin{eqnarray}
                  & H=\dfrac{\vec{p}^{\,2}}{2M} +{\cal V}(r),
            \end{eqnarray}
            with the potential function
             \begin{eqnarray}
                   {\cal V}(r)&=&\dfrac{1}{2} \Bigl\{
                        \big[\vec{\nabla} f(r)\big]\cdot\vec{r}+3\,f(r)\Bigr\} \nonumber\\
                   &=&\dfrac{1}{2} \Bigl\{
                       r f'(r) +3\,f(r)\Bigr\}.
            \end{eqnarray}

     It is easy to prove that
            \begin{eqnarray}
                  \left[\vec{\ell}, H\right]=0.
            \end{eqnarray}
     Furthermore, we use the following condition
            \begin{eqnarray}
                  \left[\vec{R}, H\right]=0
            \end{eqnarray}
     to determine the function $f(r)$, for $\vec{R}$ is the conservation quantity of the Hamiltonian system. One can have
             \begin{eqnarray}
       [\vec{R}, H]&=& \dfrac{-{\rm i}\hbar}{M} \left[\vec{\nabla}\left[{\cal V}(r)-f(r)\right]\cdot \vec{p}\right] \vec{r}\nonumber\\
                &&+\dfrac{{\rm i}\hbar}{M} \left[\left(\vec{\nabla}{\cal V}(r)\right)\cdot \vec{r}+f(r)\right] \vec{p}\nonumber\\
                &&+\dfrac{\hbar^2}{2 M}  \left(\vec{\nabla}^2 f(r)\right) \vec{r},
      \end{eqnarray}
      and thus
      \begin{subequations}
      \begin{eqnarray}
         && \vec{\nabla}\left[{\cal V}(r)-f(r)\right]=0, \\
         &&  \left(\vec{\nabla}{\cal V}(r)\right)\cdot \vec{r}+f(r)=0,  \\
         &&  \vec{\nabla}^2 f(r)=0,
      \end{eqnarray}
      \end{subequations}
      which lead to
      \begin{eqnarray}\label{eq:Fr1OvrR}
                  &&  {\cal V}(r) =f(r) =\dfrac{C}{r}.
     \end{eqnarray}
     Consequently, by choosing $C=-\kappa$ we have successfully constructed the Hamiltonian of a HA by starting from the QRL vector.

     This significant finding of the inverse problem is very helpful for studying a more complicated Hamiltonian under the constraint of $SO(4)$ symmetry. In the next section,  based on $SO(4)$ symmetry we will construct the Hamiltonian of HA with spin.


\section{Constructing the Hamiltonian of Hydrogen Atom with Spin}\label{sec:GenQRLVec}

To construct the Hamiltonian of HA with spin, we need to generalize the orbital angular momentum $\vec{\ell}$ and the spin-independent QPL vector $\vec{R}$ to the spin-dependent ones, namely, $\vec{\ell}\Rightarrow \vec{J}$ and $\vec{R} \Rightarrow \vec{\mathcal{R}}$, where $\vec{J}$ and $\vec{\mathcal{R}}$ depend on the spin angular momentum operator $\vec{S}$, which satisfies the definition of angular momentum
             \begin{equation}
             \vec{S} \times \vec{S}={\rm i}\hbar \, \vec{S},
            \end{equation}
      or in a commutator form as
             \begin{equation}
            \left[{S}_u, {S}_v\right]={\rm i}\hbar \, \epsilon_{uvw} {S}_w.
            \end{equation}
Our main result is the following theorem:

\begin{theorem}Let
         \begin{subequations}
         \begin{eqnarray}
         &&    \vec{J}=\vec{\ell}+ \mu \vec{S}, \label{eq:GJ}\\
         &&  \vec{\cal R} =\dfrac{1}{2\,M} \big(\vec{\Pi}\times\vec{J}
            -\vec{J}\times\vec{\Pi}\big)+h(r)\,\vec{r},\label{eq:GQRL}
         \end{eqnarray}
      \end{subequations}
and then they satisfy the $SO(4)$  commutation relations (up to a re-scaling factor multiplied together $\vec{\cal R}$)
\begin{subequations}
      \begin{eqnarray}
            && [J_u,\ J_v]={\rm i}\hbar\, \epsilon_{uvw} J_w,\label{eq:so4-1a} \\
            && [J_u,\ {\cal R}_v]={\rm i}\hbar\, \epsilon_{uvw}
            {\cal R}_w, \label{eq:so4-1b} \\
            &&  [{\cal R}_u,\ {\cal R}_v]\propto{\rm i}\hbar \,
            \epsilon_{uvw} {\cal H}\,J_w, \label{eq:so4-1c}
      \end{eqnarray}
\end{subequations}
and
\begin{subequations}
      \begin{eqnarray}
            && [\vec{J},\, {\cal H}]=0, \label{eq:JHComt} \\
            && [\vec{\cal R},\, {\cal H}]=0. \label{eq:RHComt}
      \end{eqnarray}
\end{subequations}
Here the Hamiltonian of HA with spin reads
\begin{eqnarray}\label{eq:H}
      && {\cal H}=\dfrac{\vec{\Pi}^2}{2\,M} +{\cal V}(r),
\end{eqnarray}
with
\begin{subequations}
\begin{eqnarray}
      &&      \vec{\Pi}=\vec{p}-\vec{A}, \label{eq:A-1}\\
      &&  \vec{A}=\mu\dfrac{(\vec{r}\times\vec{S})}{r^2}, \label{eq:A-2}\\
      &&       h(r)=\dfrac{k_1}{r} +k_2\, \mu \,\dfrac{\vec{r}\cdot\vec{S}}{r^2},
            \label{eq:A-3}\\
      && {\cal V}(r)=h(r)+\mu\dfrac{1}{2\,M} \dfrac{(\vec{r}\cdot\vec{S})^2}{r^4},
            \label{eq:A-4}
\end{eqnarray}
\end{subequations}
$k_1$ and $k_2$ are two constants, and $\mu=0, 1$.
\end{theorem}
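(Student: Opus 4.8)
The plan is to verify the asserted identities by direct computation, reading the theorem as the spin-dressed counterpart of the inverse-problem construction of Sec.~\ref{sec:inverse}: formally one replaces $\vec{p}\to\vec{\Pi}$, $\vec{\ell}\to\vec{J}$, $f(r)\to h(r)$, and then carefully bookkeeps the extra terms the spin generates. A useful preliminary move is to absorb $\mu$ into the spin by setting $\vec{\cal S}:=\mu\vec{S}$; since $\mu\in\{0,1\}$ forces $\mu^2=\mu$, the operator $\vec{\cal S}$ is again an angular momentum, $[{\cal S}_u,{\cal S}_v]=\mu^2{\rm i}\hbar\,\epsilon_{uvw}S_w={\rm i}\hbar\,\epsilon_{uvw}{\cal S}_w$, and \emph{every} spin-dependent object in the theorem --- $\vec{J}=\vec{\ell}+\vec{\cal S}$, $\vec{A}=(\vec{r}\times\vec{\cal S})/r^2$, $h(r)=k_1/r+k_2(\vec{r}\cdot\vec{\cal S})/r^2$, ${\cal V}(r)=h(r)+(\vec{r}\cdot\vec{\cal S})^2/(2Mr^4)$ --- is a function of $\vec{\cal S}$ alone (using $\mu(\vec{r}\cdot\vec{S})^2=(\vec{r}\cdot\vec{\cal S})^2$). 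This erases the $\mu=0,1$ dichotomy: at $\mu=0$ one is literally back in Sec.~\ref{sec:inverse} with $\vec{J}=\vec{\ell}$.

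First I would settle Eq.~\eqref{eq:so4-1a}. Since $\vec{\ell}$ and $\vec{S}$ commute, $\vec{J}\times\vec{J}={\rm i}\hbar\,\vec{\ell}+\mu^2{\rm i}\hbar\,\vec{S}+\mu(\vec{\ell}\times\vec{S}+\vec{S}\times\vec{\ell})={\rm i}\hbar(\vec{\ell}+\mu^2\vec{S})$, the mixed term vanishing because $[\ell_u,S_v]=0$; this equals ${\rm i}\hbar\,\vec{J}$ exactly when $\mu^2=\mu$, which is where the hypothesis $\mu=0,1$ is forced. Next, for Eq.~\eqref{eq:so4-1b} I would show that $\vec{\cal R}$ is a vector operator under $\vec{J}$. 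The building blocks $\vec{r}$, $\vec{p}$ and $\vec{\cal S}$ are all $\vec{J}$-vectors (for $\vec{\cal S}$ use $[J_u,{\cal S}_v]=[{\cal S}_u,{\cal S}_v]={\rm i}\hbar\,\epsilon_{uvw}{\cal S}_w$); hence $\vec{A}$ and $\vec{\Pi}=\vec{p}-\vec{A}$ are $\vec{J}$-vectors, $r$ and $\vec{r}\cdot\vec{\cal S}$ are $\vec{J}$-scalars so that $[\vec{J},h(r)]=0$, and the symmetrised cross product $\tfrac{1}{2M}(\vec{\Pi}\times\vec{J}-\vec{J}\times\vec{\Pi})$ of the two $\vec{J}$-vectors $\vec{\Pi},\vec{J}$ is again a $\vec{J}$-vector; Eq.~\eqref{eq:so4-1b} then follows by linearity.

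The core of the proof is Eq.~\eqref{eq:so4-1c}. I would write $\vec{\cal R}=\vec{K}+h(r)\,\vec{r}$ with $\vec{K}=\tfrac{1}{2M}(\vec{\Pi}\times\vec{J}-\vec{J}\times\vec{\Pi})$ and expand $(\vec{\cal R}\times\vec{\cal R})_w=\epsilon_{uvw}{\cal R}_u{\cal R}_v$ into $\vec{K}\times\vec{K}$, the mixed contribution $\epsilon_{uvw}\big([K_u,h]\,r_v+h\,[K_u,r_v]\big)$, and $(h\vec{r})\times(h\vec{r})=0$. Compared with the spinless computation leading to Eq.~\eqref{eq:RCrosR-0a}, three new sources of terms arise: the non-abelian ``field strength'' $[\Pi_u,\Pi_v]={\rm i}\hbar(\partial_u A_v-\partial_v A_u)+[A_u,A_v]\neq0$; the spin piece $k_2(\vec{r}\cdot\vec{\cal S})/r^2$ of $h(r)$, which contributes both through its gradient when commuted with $\vec{\Pi}$ and through its $\vec{\cal S}$-commutator with $\vec{A}$; and the term $\vec{A}^2/(2M)=\vec{\cal S}^2/(2Mr^2)-(\vec{r}\cdot\vec{\cal S})^2/(2Mr^4)$ hidden inside $\vec{\Pi}^2$, whose $(\vec{r}\cdot\vec{\cal S})^2$ part is exactly what the last term of ${\cal V}(r)$ is designed to cancel once $\mu^2=\mu$ is used. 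Using $(\vec{r}\times\vec{\cal S})\cdot\vec{r}=0$, $[\Pi_u,r_v]=-{\rm i}\hbar\,\delta_{uv}$, $\vec{\cal S}\times\vec{\cal S}={\rm i}\hbar\,\vec{\cal S}$ and $\vec{J}\times\vec{J}={\rm i}\hbar\,\vec{J}$, and by analogy with Eq.~\eqref{eq:RREl}, everything should collapse to
\begin{equation}
\vec{\cal R}\times\vec{\cal R}=\frac{-2}{M}\,{\rm i}\hbar\left(\frac{\vec{\Pi}^2}{2M}+{\cal V}(r)\right)\vec{J}=\frac{-2{\cal H}}{M}\,{\rm i}\hbar\,\vec{J},
\end{equation}
and the demand that the scalar prefactor be $\vec{\Pi}^2/(2M)$ plus a function of position is precisely what singles out ${\cal V}(r)$ and, upstream, the admissible $\vec{A}$ and $h(r)$ up to the constants $k_1,k_2$. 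Rescaling $\vec{\cal R}\to\sqrt{M/(-2{\cal H})}\,\vec{\cal R}$ then turns this into the genuine $SO(4)$ bracket, which is legitimate only once Eqs.~\eqref{eq:JHComt} and~\eqref{eq:RHComt} are in hand. The index bookkeeping of this step --- keeping the non-commuting $\vec{\Pi}$, $\vec{\cal S}$ and $r$-functions correctly ordered while reducing the $\epsilon\epsilon$ contractions --- is the main obstacle; I would first verify it at $\mu=0$ (reproducing Sec.~\ref{sec:inverse} verbatim) and only then carry out the general case.

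Finally, Eq.~\eqref{eq:JHComt} is immediate, since ${\cal H}=\vec{\Pi}^2/(2M)+{\cal V}(r)$ is built from the $\vec{J}$-scalars $\vec{\Pi}^2$, $r$ and $\vec{r}\cdot\vec{\cal S}$. For Eq.~\eqref{eq:RHComt} I would imitate the $[\vec{R},H]=0$ calculation of Sec.~\ref{sec:inverse}: expand $[\vec{\cal R},{\cal H}]=[\vec{K}+h\vec{r},\,\vec{\Pi}^2/(2M)+{\cal V}]$, insert $[\Pi_u,r_v]$, $[\Pi_u,\Pi_v]$ and the explicit ${\cal V}(r)$, and check that the coefficients of the independent tensor structures ($\propto\vec{r}$, $\propto\vec{\Pi}$, $\propto\vec{\cal S}$ and the scalars) all vanish for the stated $\vec{A}$, $h$, ${\cal V}$. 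I expect this to be the second hardest step, again turning on $\mu^2=\mu$ and on the precise coefficient $1/(2M)$ in the last term of ${\cal V}(r)$; read backwards it is exactly the system of constraints that pins down the Hamiltonian~\eqref{eq:H}.
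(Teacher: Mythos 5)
Your plan is correct and follows essentially the same route as the paper's five-step proof: vector-operator arguments give Eqs.~(\ref{eq:so4-1a}), (\ref{eq:so4-1b}) and (\ref{eq:JHComt}); the explicit evaluation of $\vec{\cal R}\times\vec{\cal R}$, with the demand that the leftover spin term vanish and the scalar prefactor be read off as ${\cal H}$, gives Eq.~(\ref{eq:so4-1c}) and fixes $h(r)$ and ${\cal V}(r)$; and Eq.~(\ref{eq:RHComt}) is checked by the same commutator inventory. Your only departure is the cosmetic absorption $\vec{\cal S}=\mu\vec{S}$ exploiting $\mu^2=\mu$, which streamlines the $\mu=0,1$ bookkeeping but does not change the argument.
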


\begin{proof} We divide five steps to complete the proof.

\emph{Step 1: Construction of $\vec{J}$ and Proving Eq. (\ref{eq:so4-1a}).}---Since the orbital angular momentum $\vec{\ell}$ and the spin operator $\vec{S}$ are defined in the external and internal spaces respectively, they are mutually commutative. If we define the ``total'' angular momentum operator as
\begin{eqnarray}\label{eq:GJ-1}
&&    \vec{J}=\vec{\ell}+ \vec{S},
\end{eqnarray}
then one immediately proves Eq. (\ref{eq:so4-1a}). Here we have designed $\vec{J}$ in the form of Eq. (\ref{eq:GJ}) with the parameter $\mu$, to treat the HAs with and without spin in a unified way; namely, $\vec{J}$ reduces to $\vec{\ell}$ if $\mu=0$, and becomes Eq. (\ref{eq:GJ-1}) if $\mu=1$.

\emph{Step 2: Construction of $\vec{\cal R}$ and Proving Eq. (\ref{eq:so4-1b}).}---Similar to Ref. \cite{2023FRSpinV}, one can recast $\vec{J}$ to the following form
\begin{eqnarray}\label{eq:GJ-2}
    \vec{J} &= &\vec{\ell}+ \mu \vec{S}\nonumber\\
    &=& \vec{r}\times (\vec{p}-\vec{A}) + \mu\, \vec{e}_r S_r \nonumber\\
    &=& \vec{r}\times \vec{\Pi} + \mu\, \vec{e}_r S_r;
\end{eqnarray}
then one extracts ``canonical'' momentum $\vec{\Pi}$ of the electron in HA as in Eq. (\ref{eq:A-1}), and the spin vector potential $\vec{A}$ induced by spin operator $\vec{S}$ of the proton in HA as Eq. (\ref{eq:A-2}). Here the operator
\begin{eqnarray}\label{eq:sr-1}
    \hat{S}_r= \hat{r}\cdot \vec{S},
\end{eqnarray}
and $\hat{r}=\vec{e}_r=\vec{r}/r$.

After performing the following replacement
\begin{eqnarray}
    \vec{p} \Rightarrow \vec{\Pi}, \quad \vec{\ell} \Rightarrow \vec{J},\quad f(r)\Rightarrow h(r),
\end{eqnarray}
the QPL vector $\vec{R}$ is turned into the generalized QPL vector $\vec{\cal R}$ as shown in Eq. (\ref{eq:GQRL}), with $h(r)$ being a function of $r$ and $(\vec{r}\cdot\vec{S})$ satisfying the commutation relation
\begin{eqnarray}
   [\vec{J}, \, h(r)]=0.
\end{eqnarray}
Due to the relations
\begin{subequations}
      \begin{eqnarray}
            && [J_u,\ J_v]={\rm i}\hbar\, \epsilon_{uvw} J_w, \label{eq:JuJv} \\
            && [J_u,\ \Pi_v]={\rm i}\hbar\, \epsilon_{uvw} \Pi_w, \\
            && [J_u,\ (\vec{\Pi}\times\vec{J})_v]={\rm i}\hbar\, \epsilon_{uvw} (\vec{\Pi}\times\vec{J})_w, \\
            && [J_u,\ (\vec{J}\times \vec{\Pi})_v]={\rm i}\hbar\, \epsilon_{uvw} (\vec{J}\times\vec{\Pi})_w, \\
            && [J_u,\ r_v]={\rm i}\hbar\, \epsilon_{uvw} r_w, \label{eq:JuRv}
      \end{eqnarray}
\end{subequations}
one directly proves Eq. (\ref{eq:so4-1b}).

\emph{Step 3: Proving Eq. (\ref{eq:so4-1c}).}---After careful deductions, one has
\begin{widetext}
       \begin{eqnarray}\label{eq:RCrosR-1a}
                   \vec{\cal R}\times\vec{\cal R}
                  &=& {\rm i}\hbar\,\dfrac{-2}{M} \Bigg\{\dfrac{\vec{\Pi}^2}{2\,M}
                        +\dfrac{1}{2} \bigg[-\dfrac{1}{{\rm i}\hbar} \big[
                              \vec{\Pi},\ h(r)\big]\cdot\vec{r}
                        +3\,h(r)+\dfrac{\mu}{M}
                              \dfrac{(\vec{r}\cdot\vec{S})^2}{r^4}\bigg]\Bigg\}\vec{J}
                        -\dfrac{\mu}{M} \biggl\{\big[\vec{\Pi},\ h(r)\big]
                              -{\rm i}\hbar\dfrac{\vec{r}}{r^2} h(r)
                              \biggr\}(\vec{r}\cdot\vec{S}).
            \end{eqnarray}

          To fulfill the requirement of Eq. (\ref{eq:so4-1c}), one must have
      \begin{subequations}
            \begin{eqnarray}
             && {\cal H}=\dfrac{\vec{\Pi}^2}{2\,M} +{\cal V}(r),\\
                  && {\cal V}(r)=\dfrac{1}{2} \bigg[-\dfrac{1}{{\rm i}\hbar} \big[
                              \vec{\Pi},\ h(r)\big]\cdot\vec{r}
                        +3\,h(r)+\dfrac{\mu}{M}
                              \dfrac{(\vec{r}\cdot\vec{S})^2}{r^4}\bigg], \label{eq:Vr} \\
                  && \dfrac{\mu}{M} \biggl\{\big[\vec{\Pi},\ h(r)\big]
                             -{\rm i}\hbar\dfrac{\vec{r}}{r^2} h(r)
                              \biggr\}(\vec{r}\cdot\vec{S})=0. \label{eq:Constr}
            \end{eqnarray}
      \end{subequations}
  \end{widetext}
  For $\mu=0$, Eq. (\ref{eq:RCrosR-1a}) reduces to Eq. (\ref{eq:RCrosR-0a}), and thus Eq. (\ref{eq:so4-1c}) is proved.
   For $\mu=1$, from Eq. (\ref{eq:Constr}) we obtain
         \begin{eqnarray}
             \big[\vec{\Pi},\ h(r)\big] -{\rm i}\hbar\dfrac{\vec{r}}{r^2} h(r)
                              =0.  \label{eq:Constr-1}
            \end{eqnarray}
 By setting
 \begin{eqnarray}
          &&   h(r)= \xi_1(r) + \xi_2(r)\left(\vec{r}\cdot\vec{S}\right),
            \end{eqnarray}
 one obtains the solution of $h(r)$ as shown in Eq. (\ref{eq:A-3}) with $\xi_1(r)=k_1/r$ and $\xi_2(r)=\mu k_2/r^2 $, and hence Eq. (\ref{eq:so4-1c}) is proved.
After substituting Eq. (\ref{eq:Constr}) into Eq. (\ref{eq:Vr}), one has the potential function ${\cal V}$ as in Eq. (\ref{eq:A-4}).

\emph{Step 4: Proving Eq. (\ref{eq:JHComt}).}---
Due to the relations
\begin{subequations}
      \begin{eqnarray}
            && [\vec{J},\, \vec{\Pi}^2]=0, \label{eq:JPi2} \\
            && [\vec{J},\, r]=0, \\
             && [\vec{J},\, \vec{r}\cdot\vec{S}]=0, \label{eq:JRS}
            \end{eqnarray}
\end{subequations}
one directly proves Eq. (\ref{eq:JHComt}).

\emph{Step 5: Proving Eq. (\ref{eq:RHComt}).}---
Due to \Eq{eq:JuJv}-\Eq{eq:JuRv}, \Eq{eq:JPi2}-\Eq{eq:JRS}, and the following relations
\begin{subequations}
      \begin{eqnarray}
        && [\Pi_u,\ r_v]=-{\rm i}\hbar\,\delta_{uv},\\
          && \bigg[\vec{\Pi},\ \dfrac{1}{r}\bigg]={\rm i}\hbar\dfrac{\vec{r}}{r^3}, \\
            && \bigg[\vec{\Pi},\ \dfrac{1}{r^2}\bigg]={\rm i}\hbar\dfrac{2\,\vec{r}}{r^4},
                  \\
            && [\Pi_u,\, \Pi_v]={\rm i}\hbar\,\epsilon_{uvw} {\cal B}_w, \\
            && \vec{\cal B}={\rm i}\hbar\,\mu(\mu-2
                  )\dfrac{\vec{r}\cdot\vec{S}}{r^4} \vec{r}, \\
            && [\vec{\Pi},\, \vec{\Pi}^2]={\rm i}\hbar\big[(\vec{\Pi}\times\vec{\cal B})
                  -(\vec{\cal B}\times\vec{\Pi})\big], \\
            && \vec{\Pi}\times\vec{J}+\vec{J}\times\vec{\Pi}=2 {\rm i}\hbar\, \vec{\Pi}, \\
             && \bigg[\vec{\Pi},\ \dfrac{\vec{r}\cdot\vec{S}}{r^2}\bigg]
                  ={\rm i}\hbar(\mu-1)\dfrac{\vec{S}}{r^2}
                        +{\rm i}\hbar(2-\mu)\frac{\vec{r}\cdot\vec{S}}{r^4} \vec{r},
      \end{eqnarray}
\end{subequations}
with $\delta_{uv}$ denoting the Kronecker delta symbol, one can directly prove \Eq{eq:RHComt}.
\end{proof}

\section{Energy Spectrum of Hydrogen Atom with Spin}\label{sec:EHSpin}

In this section, we come to solve the energy spectrum of HA with spin based on the approach of $SO(4)$ symmetry. We restrict our problem to the Hilbert space of Hamiltonian ${\cal H}$'s eigenstates, where we can replace ${\cal H}$ by its eigenvalue $E$, and then \Eq{eq:so4-1c} becomes
\begin{eqnarray}
   &&   [{\cal R}_u,{\cal R}_v]=\dfrac{-2E}{M} {\rm i}\hbar\,\epsilon_{uvw} \,J_w.
\end{eqnarray}
If we define a re-scaling operator as
\begin{eqnarray}
      \vec{\cal R}^{\prime}=\sqrt{\dfrac{M}{-2E}}\, \vec{\cal R},
\end{eqnarray}
then $\vec{J}$ and $\vec{\cal R}'$ fit the standard SO(4) algebra.
Let us further define two new quantities as
 \begin{subequations}
\begin{eqnarray}
     && \vec{W}=\dfrac{1}{2}(\vec{J}+\vec{\cal R}'),\\
    &&  \vec{K}=\dfrac{1}{2}(\vec{J}-\vec{\cal R}').
\end{eqnarray}
 \end{subequations}
Then they satisfy the following commutators
 \begin{subequations}
\begin{eqnarray}
      &&[W_u,\ W_v] = {\rm i}\hbar\,\epsilon_{uvw} W_w, \\
      &&[K_u,\ K_v] = {\rm i}\hbar\,\epsilon_{uvw} K_w, \\
      &&[W_u,\ K_v] = 0,
\end{eqnarray}
 \end{subequations}
which implies that $\vec{W}$ and $\vec{K}$ construct two independent $SU(2)$ algebras \cite{GrinerSymSO4}; hence the eigenvalues are given as $\vec{W}^2=w(w+1)\hbar^2$, and $\vec{K}^2=k(k+1)\hbar^2$, with $k,w=0,1/2,1,3/2,\dots$.

We can have
 \begin{subequations}
\begin{eqnarray}
      \vec{W}^2 &=& \dfrac{1}{4}(\vec{J}^2+\vec{{\cal R}'}^2+2\vec{J}\cdot\vec{{\cal R}'}),
            \\
      \vec{K}^2 &=& \dfrac{1}{4}(\vec{J}^2+\vec{{\cal R}'}^2-2\vec{J}\cdot\vec{{\cal R}'}).
\end{eqnarray}
 \end{subequations}
Afterwards, by using
\begin{subequations}
\begin{eqnarray}
      && \vec{J}\cdot\vec{{\cal R}'}=\vec{{\cal R}'}\cdot\vec{J}
            =\mu\sqrt{\dfrac{M}{-2E}}\,h(r) (\vec{r}\cdot\vec{S}), \\
      && \vec{{\cal R}'}^2 =\left[\mu\dfrac{(\vec{r}\cdot\vec{S})^2}{r^2}
            -\vec{J}^{\,2} -\hbar^2\right] -\dfrac{M}{2E} \big[h(r)\big]^2 r^2,
\end{eqnarray}
\end{subequations}
we get
\begin{subequations}
      \begin{eqnarray}
            \vec{W}^2+\vec{K}^2  &=& \dfrac{1}{2} \bigg[
                       {\mu}\dfrac{(\vec{r}\cdot\vec{S})^2}{r^2} -\hbar^2
                        -\dfrac{M}{2E} \big[h(r)\big]^2 r^2\bigg],
                  \label{eq:I2+K2}\\
            \vec{W}^2-\vec{K}^2 &=& \mu\sqrt{\dfrac{M}{-2E}}\,h(r)\,(\vec{r}\cdot\vec{S}),\label{eq:I2-K2}
      \end{eqnarray}
\end{subequations}

Furthermore, one can verify that
\begin{eqnarray}
      \left[\hat{S}_r, \ {\cal H}\right]=0,
\end{eqnarray}
and therefore $\hat{S}_r$ is also a conserved quantity for the Hamiltonian system, and we denote its eigenvalue as $s_r\hbar$. In the case of
\begin{eqnarray}
      \vec{S}=\frac{\hbar}{2}\vec{\sigma}
\end{eqnarray}
i.e., the spin-$1/2$ operator, one has $S_r^2 =(\hbar^2 /4)\openone$, where $\vec{\sigma}$ and $\openone$ express the vector of Pauli matrices and identity matrix respectively, which means that $s_r=\pm 1/2$.

After replacing the operators by their corresponding eigenvalues, \Eq{eq:I2+K2} and \Eq{eq:I2-K2} become
\begin{widetext}
      \begin{subequations}
            \begin{eqnarray}
                  && w(w+1)\hbar^2 +k(k+1)\hbar^2 =\dfrac{1}{2} \bigg[
                        {\mu}(s_r \hbar)^2 -\hbar^2
                        -\dfrac{M}{2E} (k_1 +{\mu}\,k_2 s_r \hbar)^2
                        \bigg], \label{eq:i2+k2} \\
                  && w(w+1)\hbar^2 -k(k+1)\hbar^2 =\mu\sqrt{\dfrac{M}{-2E}}\,\big(
                        k_1 +{\mu}\,k_2 s_r \hbar\big)s_r \hbar.
                        \label{eq:i2-k2}
            \end{eqnarray}
      \end{subequations}
\end{widetext}
There is no solution for the energy $E$ if $k_1 +{\mu}\,k_2 s_r \hbar=0$. For
$k_1+k_2 s_r \hbar \neq 0$, one has the energy spectrum of HA with spin as
\begin{eqnarray}\label{eq:K1K2}
      E\equiv E_n &=&-\dfrac{M}{2\hbar^2} \dfrac{(k_1 +\mu\,k_2s_r \hbar)^2}{\big(n \pm \mu\,s_r \big)^2},
\end{eqnarray}
with $n=2\,w+1=1,2, 3,....$ For $\mu=0$,
\begin{eqnarray}
     E_n &=&-\dfrac{M k_1^2}{2\hbar^2} \frac{1}{n^2},
\end{eqnarray}
which reduces to the energy spectrum of the usual HA.

\section{Conclusion and Discussion}\label{sec:DisConcl}


To sum up, we have written down the spin-dependent QRL vector, constructed the Hamiltonian of HA with spin under the requirement of $SO(4)$ symmetry. The energy spectrum of the novel hamiltonian system is also determined by the approach of $SO(4)$ symmetry, which reduces to the usual HA without considering spin. Our computations extend the investigation of HA from the perspective of spin potential, which may be applied to HA-based models in atomic physics and condensed matter physics.

Considering the relativistic treatments for HA with $SO(4)$ approach have been fully studied, such as \cite{2008PRAChen,2008PRAZhang,2023APEremko}, it is our next object to explore the energy spectrum of HA with spin potential under the framework of relativistic quantum mechanics. Certainly, the connection of HA and harmonic oscillator is intimate (see e.g., \cite{2020JPABars}), so our future work also includes the quantum harmonic oscillator with spin potential. Finally, we hope our consideration of HA with spin potential can be realized experimentally, or can be achieved through quantum simulation \cite{2014QSim}, in that a more complex model (i.e., Rydberg manifold \cite{2022RydbQSim}) under $SO(4)$ symmetry has been simulated successfully.


\vspace{5mm}

\begin{acknowledgments}
   J.L.C. was supported by the National Natural Science Foundations of China (Grant Nos. 12275136 and 12075001) and the 111 Project of B23045. X.Y.F. was supported by the Nankai Zhide Foundations.
\end{acknowledgments}
X.Y.F. and X.R.X. contributed equally to this work.
\onecolumngrid
\twocolumngrid

\end{document}